\documentclass[preprint, 12pt, a4paper]{elsarticle}

\usepackage{graphicx}
\usepackage{tikz}
\usetikzlibrary{arrows}

\newdefinition{definition}{Definition}
\newdefinition{example}{Example}
\newdefinition{remark}{Remark}
\newtheorem{lemma}{Lemma}
\newtheorem{theorem}{Theorem}
\newproof{proof}{Proof}

\usepackage{soul}
\usepackage{stmaryrd}
\usepackage{amssymb}
\usepackage{amsmath}
\usepackage{amsfonts}

\newcommand{\N}{\mathbb{N}}
\newcommand{\TES}[1]{\mathit{TES(#1)}}

\newcommand{\Po}{\mathcal{P}}
\newcommand{\Rp}{\mathbb{R}_+}

\newcommand{\pr}{\mathrm{pr}}

\begin{document}
\begin{frontmatter}
\title{A Formal Framework for Distributed Cyber-Physical Systems}
\author[1]{Benjamin Lion}
\author[2]{Farhad Arbab}
\author[3]{Carolyn Talcott}
\address[1]{Leiden University, Leiden, The Netherlands}
\address[2]{CWI, Amsterdam, The Netherlands}
\address[3]{SRI International, CA, USA}
\begin{abstract}
    Composition is an important feature of a specification language, as it enables the design of a complex system in terms of a product of its parts.
    Decomposition is equally important in order to reason about structural properties of a system.
    Usually, however, a system can be decomposed in more than one way, each optimizing for a different set of criteria. 
    We extend an algebraic component-based model for cyber-physical systems to reason about decomposition.
    In this model, components compose using a family of algebraic products, and decompose, under some conditions, given a corresponding family of division operators.
    We use division to specify invariants of a system of components, and to model desirable updates. 
    We apply our framework to design a cyber-physical system consisting of robots moving on a shared field, and identify desirable updates using our division operator.
\end{abstract}
\end{frontmatter}

\section{Introduction}
We present a design framework for cyber-physical systems that enables reasoning about composition and decomposition.
It is common, when modeling a system, to separate its design from its implementation. 
A design framework employs high level primitives to simplify the specification of \emph{what} behavior is desirable. 
A formal design, moreover, enables high level analysis that benefits later implementation (e.g., proof of the existence of an implementation, verification of safety properties, etc.). 
In contrast, an implementation uses low level operations to specify \emph{how} a behavior is constructed.
An implementation provides a precise description of the system's behavior that can be tested, simulated, and under some conditions, proved correct with respect to the design specification.

Composition is the act of assembling components to form complex systems.
A formal model's ability to support such construction is a particularly desirable property if the underlying parts have different types of specifications (e.g., continuous versus discrete), but still need to communicate and interact.
Our work in~\cite{DBLP:journals/corr/abs-2110-02214} presents a component model that captures both discrete and continuous changes, for which timed-event sequences (TESs) form instances of a component behavior. An observation is a set of events with a unique time stamp. A component has an interface that defines which events are observable, and a behavior that denotes all possible sequences of its observations (i.e., a set of TESs).  The precise machinery that generates such component is abstracted away. Instead, we present interaction between components as an algebra on components, that includes a wide variety of user defined operations. 

Decomposition is dual to composition, as it simplifies a component behavior by removing some of its parts. 
Decomposition is interesting in two ways: it gives insight on whether a system is a composite that includes a specific component, and it returns a subsystem that, in composition with that component, would give back the initial system.
Decomposition is not unique, and may induce a ``cost'' or ``measure'' on its alternatives, i.e., a component $A$ may be seen as a product $B \times C$ or $B \times D$ with $C \not = D$, where by some measures, the cost of using $C$ or $D$ may differ.
While the observable behavior may not change, i.e., the set of sequences of observations stays the same, the substitution of a component with another may somehow \emph{improve} the overall system, e.g., by enhancing its efficiency. For instance, running time is often omitted when specifying systems whose behaviors are oblivious to time itself. However, in practice, the time that a program takes to process its inputs matters. Thus, a component may be substituted with a component exposing the same behavior but running faster. 
Other criteria such as the size of the implementation, the cost of the production, procurement, maintenance, etc., may be considered in changing one component for another.
In this paper, we also consider an orthogonal concern: the cost of \emph{coordination}.
Intuitively, the cost of coordination captures the fact that events of two components may be tightly related. For example, if two events are related, the occurrence of an event in one component implies the occurrence of some events in another component. While such constraints are declarative in our model, their implementation may be costly. Thus, the relation between observable events of two components may increase the underlying cost to concurrently execute those two components. 
Finally, having an operation to study system decomposition brings an alternative perspective on fault detection and diagnosis~\cite{DBLP:journals/scp/KappeLAT19}.

Formally, we extend our algebra of components~\cite{DBLP:journals/corr/abs-2110-02214} with a new type of operator: a division operation.
Division intuitively models decomposition, and acts as an inverse composition operation.
Practically, the division of component $A$ by component $B$ returns one component $C$ from all components $D$ such that $A = B \times D$.
Different cost models give rise to different operations of division by picking specific alternatives for $D$. 
We abstractly reason about cost using a partially ordered set of components and show that, for some orders, the set of candidates naturally gives rise to a maximal (minimal) element.

As a running example, we consider a set of robots moving continuously on a shared field.
We use the operation of division to specify desirable updates that would prevent robots from interfering with other robots. We also apply division to find simpler components that, if used, would still preserve the entire system behavior. We finally specify the necessary coordination for the robots to self sort on the field.
\paragraph{We summarize our contributions in this paper as:}
\begin{itemize}
    \item an extension of the algebra of components with a division operator;
    \item a result on natural ordering of components; 
    \item application of division for specifying valid updates of components.
\end{itemize}

\section{Preliminaries}
A Timed-Event Stream (TES) $\sigma$ over a set of events $E$ is an infinite sequence of \emph{observations}, where its $i^{th}$ observation $\sigma(i) = (O, t)$, $i \in \N$, consists of a pair of a set of events $O \subseteq E$, called the \emph{observable}, and a positive real number $t \in \Rp$ as time stamp.
A TES has the additional properties that its consecutive time stamps are monotonically increasing and non-Zeno, i.e., if $\sigma(i) = (O_i, t_i)$ is the $i^{th}$ element of TES $\sigma$, then (1) $t_i < t_{i+1}$, and (2) for any time $t \in \Rp$, there exists an element $\sigma(i) = (O_i, t_i)$ in $\sigma$ such that $t< t_i$.
We use $\sigma'$ to denote the derivative of the stream $\sigma$, such that $\sigma'(i) = \sigma(i+1)$ for all $i \in \N$.
We refer to the stream of observables of $\sigma$ as its first projection $\pr_1(\sigma) \in \Po(E)^\omega$, and the stream of time stamps as its second projection $\pr_2(\sigma) \in \Rp^\omega$, such that $\sigma(i) = (\pr_1(\sigma)(i), \pr_2(\sigma)(i))$ for all $i \in \N$.
We write $\sigma(t) = O$ if there exists $i \in \N$ such that $\sigma(i) = (O,t)$, and $\sigma(t) = \emptyset$ otherwise. We use $\it{dom}(\sigma)$ to refer to the set of observable time stamps, i.e., the set $\it{dom}(\sigma) = \{ t\in\Rp \mid \exists i. \pr_2(\sigma)(i)=t\}$.
\subsection{Components}
A component in our model does not include the detail of \emph{how} elements of its behavior are constructed, but specifies only \emph{what} sequences of observations over time are acceptable.
Practically, there might be different ways to construct the same parts of a component behavior, but we abstract away the details of such implementation as unobservable.
We use $\TES{E}$ to denote the set of all TESs over the set of events $E$.
\label{sec:component}

\begin{definition}[Component]
     \label{def:component}
     A \emph{component} $C = (E,L)$ is a pair of a set of events $E$, called its \emph{interface}, and a \emph{behavior} $L \subseteq \TES{E}$.
 \end{definition}
 Given component $A = (E_A, L_A)$, we write $\sigma:A$ for a TES $\sigma \in L_A$.
In order to demonstrate the applicability of our framework, we model a robot, a field, and a protocol as components.
We show how components capture both cyber and physical aspects of systems.

 \newcommand{\Stop}{\mathit{stop}}
 \newcommand{\rread}{\mathit{read}}
\begin{example}[Roaming robots]
    \label{ex:robots}
    We capture, as a component, sequences of observations emerging from discrete actions of a robot at a fixed time frequency.
    For simplicity, we consider that the robot can perform actions of two types only: a move in a cardinal direction, and a read of its position sensor.
    A move action of robot $i$ creates an event of the form $\it{d(i,p)}$ where $d$ is the direction, and $p$ is the power required for the move.
    The read action of robot $i$ generates an event of the form $\it{read(i,(x;y))}$ where $(x;y)$ is a coordinate location.

    Formally, we write $R(i,T,P) = (E_R(i,P), L_R(T))$ for the robot component with identifier $i$ with $E_R(i,P)$ the set
    \[\{ S(i,p),W(i,p),N(i,p),E(i,p), \rread(i,(x;y)) \mid x,y\in \llbracket -20,20\rrbracket, p \leq P\}\]
    and $L_R(T) \subseteq \TES{E_R(i,P)}$ be such that all observations are time stamped with a multiple of the period $T \in \Rp$, i.e., for all $\sigma \in L_R(T)$, if $(O,t) \in \sigma$ then there exists $k\in\N$ such that $t = k\cdot T$.
    The component $R(i,T)$ therefore captures all robots whose directions are restricted to S(outh), W(est), N(orth), and E(ast), whose power is limited to $P$, and whose location values are integers in the interval $\llbracket-20,20\rrbracket$. The robot stops whenever $p = 0$.

    In Table~\ref{table:pref}, we display the prefix of one TES from the behavior of three robot components. Note that each line corresponds to a time instant, for which each robot may or may not have observed events. The symbol `$-$' represents no observable, while otherwise we show the set of events observed.
    The time column is factorized by the period $T$, shared by all robots.
    Thus, at time $3\cdot T$, robot $R(1,T,P)$ moves west, while robot $R(2,T,P)$ moves north, and robot $R(3,T,P)$ moves east, all with required power $p$.
\begin{table}
\centering
\caption{Three prefixes of timed-event sequences for $R(1,T,P)$, $R(2,T,P)$, and $R(3,T,P)$, where $T$ and $P$ are fixed, and each move action consumes the same power $p$.}
\begin{tabular}{cccc}
     $t/T$  &\ $\sigma:R(1,T,P)$\ &\ $\tau:R(2,T,P)$\ &\ $\delta:R(3,T,P)$ \\
    \hline
    $1$ & $\{N(1,p)\}$ & $-$ & $-$  \\
    $2$ & $\{W(1,p)\}$ & $-$& $-$ \\
    $3$ & $\{W(1,p)\}$  & $\{N(2,p)\}$ & $\{E(3,p)\}$ \\
    $4$ & $\{S(1,p)\}$ & $\{W(2,p)\}$       & $\{E(3,p)\}$ \\
    $...$ & $...$ & $...$          & $-$
\end{tabular}
\label{table:pref}
\end{table}
 \hfill$\blacksquare$
\end{example}

In the robot component of Example~\ref{ex:robots}, observations occur at a fixed frequency.
For some physical components, however, observations may occur at any point in time.
For instance, consider the field on which the robot moves. Each time a robot moves may induce a change in the field's state, and the field's state may be observable at any time or frequency. Internally, the field may record its state changes by a continuous function, while restricting the possibilities of the robots to move due to physical limitations.
We describe the field on which the robot moves as a component, and we specify, in Example~\ref{ex:fr-sign}, how robot components interact with the field component. 

\begin{example}[Field]
\label{ex:grid}
The field component captures, in its behavior, the dynamics of its state as a sequence of observations.
The collection of objects on a field is given by the set $I$.
The state of a field is a triple $(((x;y)_i)_{i\in I}, (\overrightarrow{v_i})_{i\in I}, t)$ that describes, at time $t \in \Rp$, the position $(x;y)_i$ and the velocity $\overrightarrow{v_i}$ of each object in $I$.
We model each object in $I$ by a square of dimension $1$ by $1$, and the coordinate $(x;y)_i$ represents the central position of the square.
We use $\mu= (((x_0,y_0)_i)_{i\in I}, (\overrightarrow{v_{0i}})_{i\in I}, t_0)$ as an initial state for the field, which gives for each robot in $i\in I$ a position and an initial velocity. 
Note that static obstacles on the field can be modeled as objects $i\in I$ with position $(x;y)_i$ and zero velocity.

Formally, the field component is the pair $F_\mu(I) = (E_F(I), L_F(I,\mu))$ with 
\[E_F(I) = \{ (x;y)_i, \it{move(i,\overrightarrow{v})}\mid i \in I,\ x,y\in \mathbb{R},\ \overrightarrow{v} \in \mathbb{R}\times\mathbb{R}  \}\]
where each event $\it{move(i, \overrightarrow{v})}$ continuously moves object $i$ with velocity $\overrightarrow{v}$, and event $(x;y)_i$ displays the location of object $i$ on the field.

The set $L_F(I,\mu) \subseteq \TES{E_F(I)}$ captures all sequences of observations that consistently sample trajectories of each objects in $I$, according to the change of state of the field and the internal constraint.
As a physical constraint, we impose that no two objects can overlap, i.e., for any disjoint $i,j \in I$ and for all time $t\in \Rp$, with $(x;y)_i$ and $(u;v)_j$ their respective positions, then $[x-0.5,x+0.5] \cap [u-0.5,u+0.5] = [y-0.5,y+0.5] \cap [v-0.5,v+0.5] = \emptyset$.
Even though the mechanism for such a constraint is hidden in the field component, typically, the move of a robot is eventually limited by the physics of the field.
\hfill $\blacksquare$
\end{example}

\begin{remark}
    There is a fundamental difference between the robot component in Example~\ref{ex:robots} and the field component in Example~\ref{ex:grid}. The robot component has an adequate underlying sampling frequency which prevents missing any event if observations are made by that frequency. 
    However, the field has no such frequency for its observations, which means that there may be another intermediate observation occurring between any two observations. In~\cite{DBLP:journals/corr/abs-2110-02214}, we capture,  as a behavioral property, the property for a component to interleave observations between any two observation.
\end{remark}

\newcommand{\swap}{\mathit{swap}}
\begin{example}[Protocol]
    \label{ex:protocol}
As shown in Example~\ref{ex:grid}, physics may impose some constraints that force robots to coordinate.
A protocol is a component that, for instance, coordinates the synchronous movement of a pair of robots.
For example, when two robots face each other on the field, the $\swap(i,j)$ protocol moves robot $R(i,P,T)$ north, west, and then south, as it moves robot $R(j,P,T)$ east.
Note that the protocol requires the completion of a sequence of moves to succeed. Another robot could be in the way, and therefore delay the last observables of the sequence.
The swap component is defined by $\swap(i,j) = (E_P(i,j), L_P(i,j))$ where $E_P(i,j) = E_R(i,P) \cup E_R(j,P)$ and $L_P(i,j)$ captures all sequences of observations where the two robots $i$ and $j$ swap positions.
\hfill $\blacksquare$
\end{example}

Components are declarative entities that may denote either the behavior of a specification, or the behavior of an implementation.
The usual relation between the behavior of a program and the property of such program constitutes a refinement relation.
\begin{definition}[Refinement]
    \label{def:ref}
A component $B$ is a refinement of component $A$, written as $B \sqsubseteq A$, if and only if $E_B \subseteq E_A$ and $L_B \subseteq L_A$.
\end{definition}
\begin{lemma}
    \label{lemma:po-ref}
The relation $\sqsubseteq$ is a partial order on components.
\end{lemma}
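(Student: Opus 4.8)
The plan is to observe that the refinement relation $\sqsubseteq$ is nothing more than the componentwise application of the subset relation $\subseteq$ to the two constituents of a component, namely its interface and its behavior. Since $\subseteq$ is itself a partial order on any family of sets, establishing the three defining properties of a partial order---reflexivity, transitivity, and antisymmetry---reduces in each case to the corresponding property of $\subseteq$ applied separately to the event sets and to the behaviors.

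Concretely, I would verify the three conditions in turn. For reflexivity, given any component $A = (E_A, L_A)$, I note that $E_A \subseteq E_A$ and $L_A \subseteq L_A$ both hold trivially, so $A \sqsubseteq A$ by Definition~\ref{def:ref}. For transitivity, assuming $A \sqsubseteq B$ and $B \sqsubseteq C$, the definition supplies the chains $E_A \subseteq E_B \subseteq E_C$ and $L_A \subseteq L_B \subseteq L_C$; transitivity of set inclusion then yields $E_A \subseteq E_C$ and $L_A \subseteq L_C$, that is, $A \sqsubseteq C$.

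The only step requiring a moment's care is antisymmetry, where the notion of equality on components must be pinned down. Assuming $A \sqsubseteq B$ and $B \sqsubseteq A$, I would extract $E_A \subseteq E_B$ together with $E_B \subseteq E_A$, and likewise $L_A \subseteq L_B$ together with $L_B \subseteq L_A$. Antisymmetry of $\subseteq$ then gives $E_A = E_B$ and $L_A = L_B$. Since a component is, by Definition~\ref{def:component}, precisely the pair of its interface and its behavior, equality of both coordinates is exactly equality of the components, so $A = B$.

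I do not anticipate any genuine obstacle: the result is essentially the standard fact that a finite product of partial orders---here two copies of the inclusion order---is again a partial order. The only subtlety worth stating explicitly is that component equality is coordinatewise equality of the interface and the behavior, which is what licenses the final step of the antisymmetry argument.
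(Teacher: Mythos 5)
Your proof is correct and follows exactly the paper's approach: the paper's own proof consists of the single observation that the result ``follows from reflexivity, antisymmetry, and transitivity of set inclusion,'' which is precisely the argument you spell out componentwise for the interface and the behavior. Your extra care in pinning down component equality as coordinatewise equality of the pair $(E_A, L_A)$ is a sound elaboration of the same idea, not a different route.
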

\begin{proof}
    See \ref{appendix:proof}.
\end{proof}

An alternative to refinement is \emph{containment}. The containment relation uses a point-wise inclusion relation on observations of two TESs. The containment relation on components requires that every TES in the behavior of one is point-wise contained in a TES from the behavior of the other. 
\begin{definition}[Containment]
    \label{def:containment}
    A TES $\sigma$ is contained in a TES $\tau$, written as $\sigma \leq \tau$, if and only if, for all $i \in \N$, $\pr_1(\sigma)(i) \subseteq  \pr_1(\tau)(i)$ and $\pr_2(\sigma) = \pr_2(\tau)$.
\end{definition}
We extend the containment relation to components: a component $A = (E_A, L_A)$ is contained in a component $B = (E_B, L_B)$, written $A \leq B$, if and only if $E_A \subseteq E_B$, and for every $\sigma \in L_A$, there exists a $\tau \in L_B$ such that $\sigma \leq \tau$. 

\begin{lemma}
    \label{lemma:po-cont}
The relation $\leq$ is a pre-order over arbitrary set of components.
Let $\mathcal{C}$ be a set of components such that, for all components $A \in \mathcal{C}$ and for any two TESs $\sigma:A$ and $\tau:A$, $(\sigma \leq \tau) \implies \sigma = \tau$, then $\leq$ is a partial order on $\mathcal{C}$.
\end{lemma}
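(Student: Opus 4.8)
The plan is to first establish that the containment relation on TESs is itself a partial order, and then lift these properties through the existential clause in the definition of component containment. First I would record three facts about TES containment. It is reflexive, since $\pr_1(\sigma)(i) \subseteq \pr_1(\sigma)(i)$ and $\pr_2(\sigma) = \pr_2(\sigma)$ hold trivially for all $i$. It is transitive, by chaining the pointwise inclusions $\pr_1(\sigma)(i) \subseteq \pr_1(\tau)(i) \subseteq \pr_1(\rho)(i)$ together with the equalities of the time projections. And it is antisymmetric: if $\sigma \leq \tau$ and $\tau \leq \sigma$, then mutual inclusion forces $\pr_1(\sigma)(i) = \pr_1(\tau)(i)$ for every $i$, while the time projections already agree, so $\sigma = \tau$. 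These three facts are the workhorses for the component-level statements.

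For reflexivity of $\leq$ on components I would check $A \leq A$ directly: $E_A \subseteq E_A$, and for each $\sigma \in L_A$ the witness $\tau = \sigma \in L_A$ satisfies $\sigma \leq \sigma$ by reflexivity of TES containment. For transitivity, assuming $A \leq B$ and $B \leq C$, the interface part is immediate from transitivity of set inclusion, $E_A \subseteq E_B \subseteq E_C$. For the behaviors, given $\sigma \in L_A$ I would first extract $\tau \in L_B$ with $\sigma \leq \tau$ from $A \leq B$, then extract $\rho \in L_C$ with $\tau \leq \rho$ from $B \leq C$, and conclude $\sigma \leq \rho$ by transitivity of TES containment; this exhibits the required witness in $L_C$, so $A \leq C$. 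This settles the pre-order claim for an arbitrary set of components, with no hypothesis needed.

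The partial-order claim then reduces to antisymmetry on $\mathcal{C}$: assuming $A, B \in \mathcal{C}$ with $A \leq B$ and $B \leq A$, I would show $A = B$. Equality of interfaces $E_A = E_B$ follows at once from the two inclusions. For the behaviors, take $\sigma \in L_A$; from $A \leq B$ pick $\tau \in L_B$ with $\sigma \leq \tau$, and from $B \leq A$ pick $\rho \in L_A$ with $\tau \leq \rho$, so $\sigma \leq \rho$ by transitivity. The main obstacle is precisely that these existential witnesses $\tau$ and $\rho$ need not equal $\sigma$, so mutual containment of components does not by itself yield equality of behaviors. This is exactly where the hypothesis on $\mathcal{C}$ enters: since $\sigma, \rho \in L_A$ and $A \in \mathcal{C}$, the antichain condition $(\sigma \leq \rho) \implies \sigma = \rho$ collapses the chain to $\sigma = \rho$, whence $\sigma \leq \tau \leq \sigma$, and antisymmetry of TES containment gives $\sigma = \tau \in L_B$. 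Thus $L_A \subseteq L_B$, and the symmetric argument using $B \in \mathcal{C}$ gives $L_B \subseteq L_A$, so $L_A = L_B$ and therefore $A = B$.
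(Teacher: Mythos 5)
Your proof is correct and follows essentially the same route as the paper's: pre-order properties by chaining existential witnesses, and antisymmetry by building the chain $\sigma \leq \tau \leq \rho$ with endpoints in the same behavior, collapsing it via the antichain hypothesis on $\mathcal{C}$, and then invoking antisymmetry of TES containment. Your version is in fact slightly more explicit than the paper's, which asserts $\sigma = \tau = \delta$ directly where you carefully separate the use of the hypothesis (giving $\sigma = \rho$) from the pointwise antisymmetry of containment on TESs (giving $\sigma = \tau$).
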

\begin{proof}
    See \ref{appendix:proof}.
\end{proof}

\begin{remark}
    The restriction in Lemma~\ref{lemma:po-cont} to consider components with no internal self containments between distinct TESs is necessary for having $\leq$ as a partial order. Consider for instance a component $A$ with only two TESs in its behavior, $\sigma:A$ and $\tau:A$ where $\pr_1(\sigma) = (\{a,b\})^\omega$ and $\pr_1(\tau) = (\{a\})^\omega$ and $\pr_2(\sigma) = \pr_2(\tau)$.
    Let $B$ be a component with a singleton behavior $\delta:B$ such that $\pr_1(\delta) = (\{a,b\})^\omega$ and $\pr_2(\delta) = \pr_2(\sigma)$.
    Then, $A \leq B$, and $B \leq A$, but $A \not = B$.
\end{remark}

\subsection{Algebraic product}
\label{sec:algebra}
A complex system typically consists of multiple components that interact with each other.
In~\cite{DBLP:journals/corr/abs-2110-02214}, we define a family of binary products acting on components, each parametrized with an interaction signature that captures the type of the interaction between a pair of components. As a
result, the product of two components, under a given interaction signature, returns a new component whose behavior
is obtained from constraining each operand component behavior according to the interaction signature.
    An interaction signature consists of two elements: a composability relation and a composition function.

A \emph{composability relation} $R(E_A, E_B) \subseteq \TES{E_A} \times \TES{E_B}$ captures \emph{what} TESs are composable.
Intuitively, when no constraints are imposed on occurrences of events in $E_A$ and $E_B$, the identity composability relation allows every pair of TESs to compose. Alternatively, if some events are shared between $E_A$ and $E_B$, only observations in $\sigma$ and $\tau$ that \emph{agree} on the occurrence or exclusion of shared events may compose, e.g., as in Example~\ref{ex:sync}. 
Also, $R$ may capture some events that cannot occur at the same time (e.g., the encapsulation operator of ACP~\cite{BERGSTRA1984109}).
A composability relation defines which pairs of TESs can compose, but a product operator must also specify how to produce a new TES from a pair of composable TESs.

A \emph{composition function} $\oplus : \TES{E_A} \times \TES{E_B} \rightarrow \TES{E_A \cup E_B}$ forms the composition $\sigma \oplus \tau$ of two TESs $\sigma$ and $\tau$.
Intuitively, the \emph{zipping} and item-wise union of the two TESs $\sigma$ and $\tau$ is one possible such composition function (such as $\cup$ in Example~\ref{ex:sync}). As well, one may construct a new event as the result of the simultaneous occurrence of some events (e.g., the communication function of ACP~\cite{DBLP:books/daglib/0000497}).

Let $A = (E_A, L_A)$ and $B=(E_B, L_B)$ be two components. We use $\Sigma = (R,\oplus)$ to range over interaction signatures, with $R$ a composability relations and $\oplus$ a composition functions.
 \begin{definition}[Product]
     \label{def:prod}
     The \emph{product} of $A$ and $B$ under $\Sigma = (R,\oplus)$ is the component $C = A \times_\Sigma B = (E_A \cup E_B, L)$  where
     \[
         L = \{ \sigma \oplus \tau \mid \sigma \in L_A, \tau \in L_B,\ (\sigma, \tau) \in R(E_A, E_B)\}
     \]
 \end{definition}

For simplicity, we write $\times$ as a general product with arbitrary interaction signature $\Sigma$.
\begin{remark}
    In general, composability relations on TESs are derived from composability conditions on observations. 
In~\cite{DBLP:journals/corr/abs-2110-02214}, we use a co-algebraic procedure to lift constraints on observations to constraints on TESs.
Such a scheme provides a wide range of possibilities for user-defined products, and a co-inductive proof principal for interaction signature equivalence.
\end{remark}

A useful interaction signature $\Sigma$ is the one that synchronizes shared events between two components. 
We write $\Sigma_{\it{sync}} = (R_{\it{sync}},\cup)$ for such interaction signature, and give its specification in Example~\ref{ex:sync}.

\newcommand{\ssync}{\mathit{sync}}
\begin{example}[Synchronous signature]
    \label{ex:sync}
\begin{table}
\centering
\caption{Three prefixes of timed-event sequences for $R(1,T,P)$, $R(2,T,P)$, and $R(3,T,P)$, together with the prefix resulting from forming their synchronous product with the swap protocol. Initially, $\mu(1) = (0;2)$, $\mu(2) = (0;1)$, $\mu(3) = (0;0)$.}
\begin{tabular}{cc}
    $t/T$ &    $\eta: R(1,P,T) \bowtie R(2,P,T) \bowtie R(3,P,T) \bowtie \swap(2,3)$\\
    \hline
    $1$   & $\{N(1,p)\}$\\
    $2$   & $\{W(1,p)\}$ \\
    $3$   & $\{W(1,p), N(2,p)\}$\\
    $4$   & $\{S(1,p), W(2,p), E(3,p)\}$\\
    $5$   & $\{S(2,p)\}$\\
    $...$ & $...$ 
\end{tabular}
\label{table:comp}
\end{table}
    In this example, we define the synchronous interaction signature $\Sigma_{\ssync} = (R_{\ssync},\cup)$.
    In a cyber-physical system, the action (of a cyber system) and the reaction (of a physical system) co-exist simultaneously in the same observation, and are therefore synchronous.

    First, we define $\cup$ that, given two TESs, returns the interleaving or the union of their observations, i.e., $(\sigma \cup \tau)(t) = \sigma(t) \cup \tau(t)$ and $\it{dom}(\sigma\cup\tau) = \it{dom}(\sigma)\cup\it{dom}(\tau)$.

Then, $R_\ssync(E_1,E_2)$ relates pairs of TESs such that all shared events occur at the same time in both TESs, i.e.,  $(\sigma,\tau) \in R_\ssync(E_1,E_2)$ if and only if, for all time stamps $t \in \Rp$, 
$\sigma(t)\cap E_2 = \tau(t) \cap E_1$.

The synchronous interaction signature $\Sigma_\ssync = (R_{\it sync}, \cup)$ leads to the product $\times_{\Sigma_\ssync}$ that forces two components to observe shared events at the same time. We write $\bowtie$ for such product.
     As a result, $R(1,P,T) \bowtie R(2,P,T) \bowtie R(3,P,T) \bowtie \swap(2,3)$ captures all sequences of moves for the robots constrained by the $\swap$ protocol.
\hfill $\blacksquare$
\end{example}
\begin{example}[Field-Robot signature]
    \label{ex:fr-sign}
    The interactions occurring between the field and the robot components impose simultaneity on some disjoint events.
    For instance,
    every observation of the robot containing the event $d(i,p) \in E_R(i,P)$ must occur at the same time as an observation of the field containing the event $\it{move(i,\overrightarrow{v(d,p)})} \in E_F(I)$ with $\overrightarrow{v(d,p)}$ returning the velocity as a function of direction $d$ and required power $p$.
    Also, every observation containing the event $\it{read(i,(\lfloor x\rfloor,\lfloor y\rfloor))} \in E_R(i,P)$ must occur at the same time as an event $(x;y)_i \in E_F(I)$ where $\lfloor z\rfloor$ gives the integer part of $z$.

    Formally, we capture such interaction in the interaction signature $\Sigma_{RF} = (R_{RF},\cup)$, where $R_{RF}$ is the smallest symmetric relation defined as for all $(\tau,\sigma)\in R_{RF}$, for all $t\in \Rp$, 
    \[\it{read(i,(n,m))} \in \tau(t) \iff  (\exists \it{(x;y)_i} \in \sigma(t) \land n = \lfloor x\rfloor \land m = \lfloor y\rfloor)\]
    and $\it{d(i,p)} \in \tau(t) \iff  \it{move(i,\overrightarrow{v(d,p)})} \in \sigma(t)$.

    As a result, the product $(R(1,T,P) \bowtie R(2,T,P) \bowtie R(3,T,P)) \times_{\Sigma_{RF}} F_\mu(I)$ captures all sequences of observations for the three robots constrained by the field component. 
    \hfill$\blacksquare$
\end{example}
\begin{remark}
    The floor part $\lfloor\cdot \rfloor$ acts as an approximation of the robot sensor on the field's position value.
    A different interaction signature may, for instance, introduce some errors in the reading.

    The interaction signature may also impose that $d(i,p)$ relates to the speed $(0,1/T)$, $(0,-1/T)$, $(-1/T,0)$, and $(1/T,0)$ when $d = N$, $d = S$, $d = W$, and $d = E$, respectively. Then, for a time interval $T$, the power $p$ moves the robot by one unit on the field.
\end{remark}
\begin{remark}
  In practice, it is unlikely that two observations happen at exactly at the same time.
  However, in our framework, the time of an observation is an abstraction that requires every event of the observation to
occur after the events of the previous observation, and before the events of the next observation.
\end{remark}

\begin{definition}[Monotonicity]
    \label{def:monotonicity}
    Let $\times$ be a commutative product.
    Then, $\times$ is monotonic if and only if, for $B \sqsubseteq A$ and for any $C$, we have $B \times C \sqsubseteq A \times C$.
\end{definition}
\begin{remark}
Monotonicity shows that the inclusion of component's behavior is preserved by product. 
Let $A$ and $B$ be two components such that $B \sqsubseteq A$.
Suppose that $P$ is a component that models a property satisfied by component $A$ and preserved under product with a component $C$, then $P$ is satisfied by component $B$ and component $B\times C$, by monotonicity. 
Note that the definition of monotonicity assumes $\times$ to be commutative. That assumption can be relaxed by defining left and right monotonicity. 
\end{remark}

\begin{lemma}[Monotonicity of $\bowtie$]
    The product $\bowtie$ in Example~\ref{ex:sync} is monotonic.
    \label{lemma:comp}
\end{lemma}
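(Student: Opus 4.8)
The plan is to unfold Definition~\ref{def:monotonicity} and check the two clauses of refinement (Definition~\ref{def:ref}) for $B \bowtie C \sqsubseteq A \bowtie C$ under the assumption $B \sqsubseteq A$, i.e.\ $E_B \subseteq E_A$ and $L_B \subseteq L_A$. The interface clause is immediate: by Definition~\ref{def:prod} the two product interfaces are $E_B \cup E_C$ and $E_A \cup E_C$, and $E_B \subseteq E_A$ yields $E_B \cup E_C \subseteq E_A \cup E_C$. The substance of the lemma is therefore the behavioral inclusion $L_{B \bowtie C} \subseteq L_{A \bowtie C}$.

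For that inclusion I would fix an arbitrary element of $L_{B \bowtie C}$, which by Definition~\ref{def:prod} has the form $\sigma \cup \tau$ with $\sigma \in L_B$, $\tau \in L_C$, and $(\sigma,\tau) \in R_\ssync(E_B,E_C)$. Since $L_B \subseteq L_A$ we get $\sigma \in L_A$ for free, and the composition function $\cup$ is identical in both products. Hence the whole argument collapses to a single implication, namely that every pair composable under the smaller interface is composable under the larger one: $(\sigma,\tau) \in R_\ssync(E_B,E_C) \Rightarrow (\sigma,\tau) \in R_\ssync(E_A,E_C)$. Granting this, $\sigma \cup \tau$ is produced in $A \bowtie C$ by the very same witnessing pair, and the inclusion follows.

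I expect all the difficulty to sit in that implication. Unfolding Example~\ref{ex:sync}, the hypothesis reads $\sigma(t) \cap E_C = \tau(t) \cap E_B$ for every $t$, while the goal reads $\sigma(t) \cap E_C = \tau(t) \cap E_A$. As $\sigma(t) \subseteq E_B$, the left-hand sides coincide, so everything reduces to comparing $\tau(t) \cap E_A$ with $\tau(t) \cap E_B$; writing $E_A$ as the disjoint union $E_B \cup (E_A \setminus E_B)$, these differ by exactly $\tau(t) \cap (E_A \setminus E_B)$. The delicate point, and I believe the real obstacle, is that $\tau$ ranges over $L_C$ and may well observe an event in $(E_A \setminus E_B) \cap E_C$, i.e.\ an event that becomes \emph{newly shared} with $C$ once the interface is enlarged from $B$ to $A$; such an event is unconstrained in $B \bowtie C$ but imposes a synchronization demand in $A \bowtie C$. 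I would therefore expect the proof to carry the side condition $(E_A \setminus E_B) \cap E_C = \emptyset$ (equivalently, restrict to refinements that add no event observable in $C$): under it $\tau(t) \cap (E_A \setminus E_B) = \emptyset$, so $\tau(t) \cap E_A = \tau(t) \cap E_B$ and the implication closes cleanly. Absent such a condition I would expect the step, and monotonicity itself, to be in jeopardy, since a single fresh shared event can let $B \bowtie C$ admit a TES that $A \bowtie C$ rejects.
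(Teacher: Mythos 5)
Your dissection is correct, and the gap you isolate is present in the paper's own proof. The paper argues exactly along your first two steps: interface inclusion from $E_B \subseteq E_A$; then, for any $\sigma$ in the behavior of $B \bowtie C$, a witnessing pair $\beta:B$, $\delta:C$ with $\sigma = \beta \cup \delta$; then $\beta:A$ from $L_B \subseteq L_A$; and from there it simply concludes that ``$\sigma$ is also an element of the behavior of $A \bowtie C$.'' That last step silently uses precisely the implication you singled out, $(\beta,\delta) \in R_{\mathit{sync}}(E_B,E_C) \Rightarrow (\beta,\delta) \in R_{\mathit{sync}}(E_A,E_C)$, treating synchrony as a property of the pair of TESs alone. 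No justification is given, and none is possible in general, because $R_{\mathit{sync}}$ is interface-dependent: it demands simultaneity on all of $E_A \cap E_C$, which strictly contains $E_B \cap E_C$ whenever the refinement introduces events shared with $C$.

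Your suspicion that monotonicity itself can fail is also warranted, and your sketch turns into a concrete counterexample. Take $E_A = \{a,e\}$, $E_B = \{a\}$, $E_C = \{e\}$; let $\beta$ observe $\{a\}$ at every positive integer time, let $\delta$ observe $\{e\}$ at every positive integer time, and put $A = (E_A,\{\beta\})$, $B = (E_B,\{\beta\})$, $C = (E_C,\{\delta\})$, so that $B \sqsubseteq A$. Then $(\beta,\delta) \in R_{\mathit{sync}}(E_B,E_C)$, both sides of the defining condition being empty at every $t$, so $\beta \cup \delta$ is in the behavior of $B \bowtie C$; but $(\beta,\delta) \notin R_{\mathit{sync}}(E_A,E_C)$, since $\delta(t) \cap E_A = \{e\} \neq \emptyset = \beta(t) \cap E_C$, and because the behaviors are singletons, the behavior of $A \bowtie C$ is empty. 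Hence $B \bowtie C \not\sqsubseteq A \bowtie C$. So the lemma as stated needs an extra hypothesis; yours, $(E_A \setminus E_B) \cap E_C = \emptyset$, suffices (as does restricting refinement to interface-preserving refinements, $E_B = E_A$), and under it your argument closes the proof exactly as you describe. In short, the deficiency lies in the paper's proof, not in your proposal.
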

\begin{proof}
    See \ref{appendix:proof}.
\end{proof}

\section{Division and Conformance}
\label{sec:division}
Consider two components $B$ and $C$, and a product $\times$ over components that models the interaction constraints between $B$ and $C$. We use $=$ on components as strict structural equality: $A = B$ if the interfaces and behaviors of $A$ and $B$ are equal.
The composite expression $A = C \times B$ captures, as a component, the concurrent observations of components $C$ and $B$ under the interaction modelled by $\times$. 
Consider a component $D$ different from $C$ such that $C \times B = D \times B$. The equality states that the result of $D$ interacting with $B$ is the same as $C$ interacting with $B$. 
Consequently, in this context, component $C$ can be replaced by component $D$ while preserving the global behavior of $A$.

In general, a component $D$ that can substitute for $C$ is not unique. The set of alternatives for $C$ depends, moreover, on the product $\times$, on the component $B$, and on the behavior of $A$.  A `goodness' measure may induce an order on this set of alternative components, and eventually give rise to a \emph{best} substitution candidate.
More generally, the problem is to characterize, given two components $A$ and $B$ and an interaction product $\times$, the set of all components $C$ such that $A = C \times B$.

\subsection{Divisibility and quotient} 
The divisibility of a component $A$ by a component $B$ under a product $\times$ captures the possibility to write $A$ as a product of $B$ with another component. 

\begin{definition}[Right (left) divisibility]
    \label{def:div}
    A component $A$ is right (respectively, left) divisible by $B$ under a product $\times$ if there exists a component $C$ such that  $B \times C = A$ (respectively, $C \times B = A$).
\end{definition}

$A$ is \emph{divisible} by $B$ under $\times$ when $A$ is both left and right divisible by $B$ under $\times$.
Intuitively, the set of witnesses for divisibility, contains every component whose product (under the same interaction signature) with the divisor yields the dividend. We call such witnesses \emph{quotients}.

\begin{definition}[Right (left) quotients]
    \ The right (respectively, left) quotients of $A$ by $B$ under the product $\times_\Sigma$, written $A/^*_\Sigma B$ (respectively, $A\setminus^*_\Sigma B$), is the set $\{C \mid B \times_\Sigma C = A \}$ (respectively, $\{C \mid C \times_\Sigma B = A \}$).
\end{definition}

If $\times_\Sigma$ is commutative, then $A/^*_\Sigma B = A\setminus^*_\Sigma B$, in which case we write $_\Sigma\cfrac{A}{B}*$.
We define left (right) division operators that pick, given a choice function\footnote{We assume the axiom of choice~\cite{Plato} and the existence of a function $\chi$ that picks an element from a set.}, the best element from their respective sets of quotients as their quotients.

\begin{definition}[Right (left) division]
    Let $A$ be divisible by $B$ under $\times_\Sigma$.
    The right (respectively, left) quotient of $A$ divided by $B$, under the product $\times_\Sigma$ and the choice function $\chi$ over the right (respectively, left) quotients, is the element $\chi(A/_\Sigma^* B)$ (respectively, $\chi(A\setminus_\Sigma^* B)$). We write $A/_\Sigma^\chi B$ (respectively, $A\setminus_\Sigma^\chi B$) to represent the quotient.
\end{definition}

If $\times_\Sigma$ is commutative, then $A/^\chi_\Sigma B = A\setminus^\chi_\Sigma B$, in which case we denote the division as $_\Sigma\cfrac{A}{B}\chi$.

\begin{example}[Lowest element]
    One measure 
    that can impose an order on a set of quotients uses the fact that a component may contain all behavior of another.
    Indeed, every quotient
    has the property that, in composition with the divisor, yields the dividend. Then, a quotient whose behavior is fully contained in that of every other quotient may be optimal in terms of behavior complexity.

    Let $\mathcal{C}$, the set of right (left) quotients for $A$ divisible by $B$ for product $\times$, be equipped with an ordering such that the lowest element is an element of $\mathcal{C}$, then a function that picks the lowest element can act as a choice function to define the result of the division of $A$ by $B$.
    \hfill$\blacksquare$
\end{example}

One may consider $\leq$ as a natural ordering on quotients.
However, the set of quotients equipped with the containment relation may not have a lowest element.
One such example is shown in Table~\ref{table:counter-ex-quo}. 
Consider $A$, $B$, $C$, and $D$ with $\{0,1,2\}$, $\{0,1\}$, $\{0,2\}$, and $\{1,2\}$ as interface, respectively.
Using the synchronous composition operation, the TESs $\tau$ and $\eta$ compose with the TES $\delta$ to give the TES $\sigma$. However, $C$ and $D$ require synchronization on their shared event to compose with $B$. A smaller component than $C$ and $D$ would be a component $F$, whose interface is the singleton set containing event $2$. 
However, such component has no shared event with $B$, and may therefore freely interleave its observations, which does not correspond with observations in $A$. Thus, $F$ is not an element of the quotients, and $C$ and $D$ have no lower bound in the set of quotients.

\begin{table}
\centering
\caption{Counter example for a lowest element in the division of $A$ by $B$, with $C$ and $D$ two quotients.}
\begin{tabular}{ccccc}
          &\ $\sigma:A$\ &\ $\tau:B$\ &\ $\delta:C$ \ &\ $\eta: D$\\
    \hline
    $t_1$ & $\{0,1,2\}$ & $\{0,1\}$ & $\{0,2\}$  & $\{1,2\}$ \\
    $t_2$ & $\{0,1,2\}$ & $\{0,1\}$& $\{0,2\}$   & $\{1,2\}$ \\
    $t_3$ & $\{0,1,2\}$  & $\{0,1\}$ & $\{0,2\}$ & $\{1,2\}$ \\
    $...$ & $...$ & $...$          & $...$ &
\end{tabular}
\label{table:counter-ex-quo}
\end{table}

We show in the next theorem that a subset of quotients with a shared interface has a lower bound. 
We discuss how the choice of an interface for a quotient may be guided by some qualitative design choices. 

\newcommand{\lub}{\mathit{lub}}
\newcommand{\glb}{\mathit{glb}}
\begin{theorem}
    \label{th:div}
    Let $\leq$ be the containment relation introduced in Definition~\ref{def:containment}.
    Let $\times_\Sigma$ be a commutative, associative, and idempotent product on components, and such that for any two components $C$ and $D$ with the same interface, $C \times_\Sigma D \leq C$.
    Given $A$ divisible by $B$ under $\times_\Sigma$, any finite subset of quotients sharing the same interface $E$ has a lower bound that is itself a quotient in $A/^*_\Sigma B$.
\end{theorem}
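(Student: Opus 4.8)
The plan is to exhibit an explicit lower bound, namely the $\times_\Sigma$-product of all the components in the given finite subset, and then to check separately that this product (i) lies below each member in the containment order and (ii) is again a quotient of $A$ by $B$. Write the subset as $\{C_1,\dots,C_n\}$ with each $E_{C_i}=E$ and each satisfying $B \times_\Sigma C_i = A$, and set $P = C_1 \times_\Sigma \cdots \times_\Sigma C_n$ (well defined up to the order of factors by commutativity and associativity). Since by Definition~\ref{def:prod} the interface of a product is the union of the operand interfaces, and every $C_i$ has interface $E$, every sub-product of the $C_i$'s---in particular $P$ itself---again has interface $E$. This observation is exactly what lets the contraction hypothesis apply at each step.

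For the lower-bound claim I would fix an index $i$ and use commutativity and associativity to rewrite $P = C_i \times_\Sigma Q_i$, where $Q_i$ is the product of the remaining factors; as just noted, $Q_i$ has interface $E$, the same as $C_i$. The hypothesis that $C \times_\Sigma D \leq C$ whenever $C$ and $D$ share an interface then yields $P = C_i \times_\Sigma Q_i \leq C_i$ directly. Doing this for every $i$ shows that $P$ is a lower bound of $\{C_1,\dots,C_n\}$ with respect to $\leq$.

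For the quotient claim I would compute $B \times_\Sigma P$ and collapse it to $A$ using idempotency twice. First, idempotency lets me replace the single divisor $B$ by $n$ copies, $B = B \times_\Sigma \cdots \times_\Sigma B$; then commutativity and associativity let me regroup so as to pair the $k$-th copy of $B$ with $C_k$, giving $(B \times_\Sigma C_1) \times_\Sigma \cdots \times_\Sigma (B \times_\Sigma C_n)$. Each factor equals $A$ because each $C_k$ is a quotient, so the product becomes an $n$-fold product of $A$, which idempotency collapses back to $A$. Hence $B \times_\Sigma P = A$, i.e. $P \in A/^*_\Sigma B$, and together with the previous paragraph $P$ is the desired lower bound inside the quotient set.

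The genuinely delicate points here are not computational but bookkeeping: one must ensure that the interface of every intermediate product stays equal to $E$ (otherwise the contraction hypothesis, which is stated only for equal interfaces, cannot be invoked), and one must justify the free rearrangement and regrouping of factors, which rests entirely on commutativity and associativity. Both are cleanest to handle by an induction on $n$, with the $n=1$ case trivial and the inductive step pairing off one fresh copy of $B$ with one $C_i$; I expect this inductive formulation to be the main thing worth writing out carefully, while the remaining manipulations are routine.
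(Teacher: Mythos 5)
Your proof is correct and takes essentially the same route as the paper: both exhibit the $\times_\Sigma$-product of all members of the finite subset as the lower bound, using the contraction hypothesis $C \times_\Sigma D \leq C$ for the ordering claim and commutativity, associativity, and idempotency to show $B \times_\Sigma P = A$. Your version is in fact slightly more careful than the paper's, which only writes out the two-quotient case and appeals to iteration over the set, whereas you make the interface bookkeeping and the induction on $n$ explicit.
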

\begin{proof}
Let $\mathcal{C}(E)$ be a finite subset of the set $\{ C \mid C$ \textit{ has interface $E$ and } $C \in A/^*_\Sigma B \}$.
    We also write $\times_\Sigma \mathcal{C}(E)$ for the product of all components in $\mathcal{C}(E)$.

    For any $C \in \mathcal{C}(E)$, we have
    \[ \times_\Sigma \mathcal{C}(E) \leq C\]
    which makes $\times_\Sigma \mathcal{C}(E)$ a lower bound for $\mathcal{C}(E)$.

    Given associativity, commutativity, and idempotency of $\times_\Sigma$, for any $C_1, C_2 \in \mathcal{C}(E)$:
    \begin{align*}
        A & = B \times_\Sigma C_1 \\
        A & = B \times_\Sigma C_2 \\
        A\times_\Sigma A  = A & =  (B \times_\Sigma C_1) \times_\Sigma (B \times_\Sigma C_2) \\
                                  A & = B \times_\Sigma (C_1 \times_\Sigma C_2)
    \end{align*}
    which, applied over the set $\mathcal{C}(E)$, gives $A = B \times_\Sigma (\times_\Sigma \mathcal{C}(E))$.
    Thus, $\times_\Sigma \mathcal{C}(E) \in \mathcal{C}(E)$. 
\footnote{Strictly speaking, closure under finite product does not necessarily imply closure under infinite product. We leave investigating the conditions under which closure under infinite product holds, for future work.}
    \qed
\end{proof}

\newcommand{\one}{\mathbf{1}}
\newcommand{\zero}{\mathbf{0}}
When conditions of Theorem~\ref{th:div} are satisfied, we write $A/^{\leq,E}_\Sigma B$ for the lower bound of the set of quotients with interface $E$.
\begin{remark}
    The operation of division defined by Theorem~\ref{th:div} raises several points for discussion.  
    First, the set of quotients sharing the same interface is structured. Indeed, when the interface is fixed, each finite subset of quotients has a lowest element under $\leq$, which makes the definition of a division operator possible. 
    Second, the fact that there is, in general, no minimal element over the set of all quotients reveals the important role that interfaces play in system decomposition. 
    In other words, one may consider another measure to \emph{choose} a quotient interface, that is orthogonal to behavior containment (see Section~\ref{sec:discussion} for a discussion about the cost of coordination).
\end{remark}
 
We use $\one$ to denote the component $(\emptyset,\TES{\emptyset})$, and $\zero$ to denote the component $(\emptyset, \emptyset)$, that has the empty interface and no behavior.

A component $A = (E_A, L_A)$ is \emph{closed under insertion of silent observations} if, for any $\sigma \in L_A$, and for any silent observation $(\emptyset, t)$ with $t \in \Rp$, and given $i \in \N$ such that $\sigma(i) = (O,t_1)$ and $\sigma(i+1) = (O',t_2)$ with $t_1 < t < t_2$, then there exists $\tau \in L_A$ such that $\sigma(k) = \tau(k)$ for all $k \leq i$, $\sigma(i+1) = (\emptyset, t)$, and $\sigma(k+2) = \tau(k+1)$ for all $k>i$.

In order to reason about components algebraically, we want some properties to hold. For instance, that a component is divisible by itself and the set of quotients contains the unit element.
\begin{lemma}
    Let $A$ be a component closed under insertion of silent observations, and $\Sigma_\ssync$ the synchronous interaction signature introduced in Example~\ref{ex:sync}. Then, $\one \in A/^*_{\Sigma_\ssync} A$.
    \label{lemma:unit}
\end{lemma}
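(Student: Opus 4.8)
The plan is to establish $\one \in A/^*_{\Sigma_\ssync} A$ by unfolding the definition of the right quotient: $\one$ belongs to this set exactly when $A \bowtie \one = A$, where $\bowtie = \times_{\Sigma_\ssync}$ and $\one = (\emptyset, \TES{\emptyset})$. Since $=$ denotes structural equality, I would split this into equality of interfaces and equality of behaviors. The interface of $A \bowtie \one$ is $E_A \cup \emptyset = E_A$, so the interfaces agree immediately. For the behaviors, I first compute the composability relation $R_\ssync(E_A, \emptyset)$: for any $\tau \in \TES{\emptyset}$ every observable is empty, so at each time $t$ both $\sigma(t) \cap \emptyset$ and $\tau(t) \cap E_A$ equal $\emptyset$; hence $R_\ssync(E_A, \emptyset)$ is the full relation and the product behavior is $L = \{\sigma \cup \tau \mid \sigma \in L_A,\ \tau \in \TES{\emptyset}\}$. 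By the definition of $\cup$, each $\sigma \cup \tau$ is just $\sigma$ with silent observations inserted at the timestamps in $\mathit{dom}(\tau) \setminus \mathit{dom}(\sigma)$.

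For the inclusion $L_A \subseteq L$, given $\sigma \in L_A$ I would choose the witness $\tau \in \TES{\emptyset}$ defined by $\tau(i) = (\emptyset, \pr_2(\sigma)(i))$. This is a valid TES over $\emptyset$, since it inherits strict monotonicity and the non-Zeno condition from $\sigma$, and it satisfies $\mathit{dom}(\tau) = \mathit{dom}(\sigma)$, so $\sigma \cup \tau = \sigma$ and thus $\sigma \in L$. Note that this direction does not use the hypothesis that $A$ is closed under insertion of silent observations.

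For the converse inclusion $L \subseteq L_A$, take any $\rho = \sigma \cup \tau$ with $\sigma \in L_A$ and $\tau \in \TES{\emptyset}$, and let $S = \mathit{dom}(\tau) \setminus \mathit{dom}(\sigma)$ collect the timestamps at which $\rho$ carries an inserted silent observation. The plan is to reach $\rho$ from $\sigma$ by repeatedly invoking the closure hypothesis, inserting one silent observation $(\emptyset, t)$ for each $t \in S$ into the appropriate gap between consecutive timestamps of $\sigma$. Because the timestamps of any TES are strictly increasing and non-Zeno, they tend to infinity, so each bounded gap between consecutive timestamps of $\sigma$ contains only finitely many points of $S$; the insertions are therefore \emph{locally} finite, which is what makes the inductive insertion well defined in each gap.

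The main obstacle is precisely the passage from local to global finiteness in this last step. The closure hypothesis, as stated, inserts a single silent observation, and iterating it yields membership in $L_A$ only after \emph{finitely} many insertions; yet when $\tau$ ranges over all of $\TES{\emptyset}$ the set $S$ is in general infinite, so $\rho$ differs from $\sigma$ at infinitely many positions and is reached by no finite sequence of single insertions. I would close this gap with a limiting argument: build a sequence $\sigma = \rho_0, \rho_1, \rho_2, \ldots$ in $L_A$ with $\rho_n$ agreeing with $\rho$ on a prefix that grows without bound, and argue that the limit $\rho$ lies in $L_A$. Making this rigorous requires either reading closure coinductively (closure under locally finite insertion, as a greatest fixed point of the single-insertion rule) or assuming $L_A$ closed under limits of such prefix-increasing sequences. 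This is the step I expect to carry the real weight of the proof, the two inclusions above being otherwise routine.
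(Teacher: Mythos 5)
Your proposal is structurally the same proof as the paper's: both reduce the claim to $A \times_{\Sigma_\ssync} \one = A$, both obtain $L_A \subseteq L(A \times_{\Sigma_\ssync} \one)$ from the witness $\tau \in \TES{\emptyset}$ with $\pr_2(\tau) = \pr_2(\sigma)$, and both obtain the converse inclusion from the closure hypothesis. The difference is that the paper dispatches that converse inclusion in one bare assertion --- for every $\sigma : A$ and \emph{every} $\tau : \one$, the pair is composable and $\sigma \cup \tau \in L_A$ --- implicitly reading ``closed under insertion of silent observations'' as licensing all of the (generally infinitely many) insertions induced by $\tau$ in a single step.

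So the obstacle you isolate is genuine, and it is a gap in the paper's own argument as much as in yours; moreover, no limiting argument can close it without an extra hypothesis, because under the literal single-insertion reading the lemma is false. Concretely: let $E_A = \{a\}$ and let $L_A$ be the set of all TESs over $\{a\}$ having only finitely many empty observables. Inserting one silent observation preserves this property, so $A$ satisfies the stated closure condition; yet for $\sigma$ observing $\{a\}$ at times $1, 2, 3, \ldots$ and $\tau \in \TES{\emptyset}$ with timestamps $3/2, 5/2, 7/2, \ldots$, the TES $\sigma \cup \tau$ has infinitely many empty observables, hence $\sigma \cup \tau \notin L_A$ and $A \times_{\Sigma_\ssync} \one \neq A$. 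Your two proposed repairs are exactly the available options. Either restate the closure hypothesis as ``$\sigma \cup \tau \in L_A$ for every $\sigma \in L_A$ and every $\tau \in \TES{\emptyset}$'' --- evidently the authors' intent, after which both inclusions are immediate --- or keep single-insertion closure and additionally assume $L_A$ closed under limits of prefix-stabilizing sequences; the latter makes your $\rho_0, \rho_1, \rho_2, \ldots$ construction sound, since the inserted timestamps taken in increasing order tend to infinity by non-Zenoness, so the agreement prefix grows without bound. In short: your instinct about where the real weight lies is correct, the paper glosses over precisely that step, and your proof becomes complete once the strengthened hypothesis (or the limit-closure assumption) is stated explicitly rather than left as a to-do.
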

\begin{proof}
    For any element $\sigma:A$, and for any $\tau:\one$, we have $(\sigma,\tau) \in R$ and $\sigma [\cup] \tau: A$. Moreover, for any $\sigma:A$, there exists $\tau:\one$ such that $(\sigma,\tau) \in R$ and $\sigma [\cup] \tau = \sigma$. Then, $\one$ is in the set of quotients of $A$ by $A$.
    \qed
\end{proof}

\begin{remark}
    Note that Lemma~\ref{lemma:unit} assumes components to be closed under insertion of silent observations. The reason, as shown in the proof, comes from the product of $\one$ with a component $A$ that may insert silent observations at arbitrary points in time. 
    A consequence of Lemma~\ref{lemma:unit} is the existence of a choice function that can pick, from the set of quotients, the unit component for the division of $A$ by $A$.
\end{remark}
\begin{example}
    Let $(R(1,P,T) \bowtie R(2,P,T) \bowtie R(3,P,T)) \times_{\Sigma_{RF}} F_\mu(I)$ be the product of three robot components and a field component with $I = \{1,2,3\}$.
    Consider the component $P = (E,L)$ with $E = \{\it{read((n,m),i)}, (n,m)_i \mid n,m \in \N \}$ and $L \subseteq \TES{E}$.

    Then, $((R(1,P,T) \bowtie R(2,P,T) \bowtie R(3,P,T)) \times_{\Sigma_{RF}} F_\mu(I))/_{\Sigma_{RF}}^{\leq, E'} P$, with $E' = (E_R(1) \cup E_R(2) \cup E_R(3) \cup E_F(I)) \setminus \{ (n,m)_i \mid n,m \in \N  \}$, denotes the component that, in composition with $P$, recovers the initial system. Note that the component resulting from division ranges over the interface $E'$. As a consequence, all events $(n,m)_i$ have been hidden in the quotient. 
    Note that the division exists due to the interaction signature $\Sigma_{RF}$ that imposes simultaneity on occurrence of events $\it{read((n,m),i)}$ and $(n,m)_i$.
    \hfill$\blacksquare$
\end{example}

\subsection{Conformance}
\label{sec:conformance}
The criterion for divisibility of $A$ by $B$, under product $\times$, is the existence of a quotient $C$ such that $B \times C= A$.
The equality between $B \times C$ and $A$ makes division a suitable decomposition operator. We can define, a similar operation to describe all components $C$ that \emph{coordinate} $B$ in order for the result to behave in conformance with specification $A$. In this case, we replace equality with the refinement relation of Definition~\ref{def:ref}.
We consider in the sequel non-empty coordinators, i.e., components different from $\bf{0} = (\emptyset, \emptyset)$.

\begin{definition}[Right (left) conformance]
    \label{def:conf}
    Component $B$ is right (respectively, left) conformant to component $A$ under $\times$ if there exists a non-empty component $C$ such that  $C \times B \sqsubseteq A$ (respectively, $B \times C \sqsubseteq A$).
\end{definition}

\begin{definition}[Right (left) conformance coordinators]
    \label{def:star-conf}
    \ \ The set of right (respectively, left) conformant coordinators that make $B$ behave in conformance with $A$ under $\times_\Sigma$, denoted as $A \downharpoonright^*_\Sigma B$ (respectively, $A \downharpoonleft^*_\Sigma B$), is the set $\{C \mid C \not = \mathbf{0} \textit{ and } C \times_\Sigma B \sqsubseteq A \}$ (respectively, $\{C \mid C \not = \mathbf{0} \textit{ and } B \times_\Sigma C \sqsubseteq A \}$).
\end{definition}

If $\times_\Sigma$ is commutative, then $A \downharpoonright^*_\Sigma B = A \downharpoonleft^*_\Sigma B$, in which case we write $A \downarrow^*_\Sigma B$. 
Similarly to the set of quotients, the set of coordinators having the same interface is structured and gives ways to define non-trivial coordinators, as in Theorem~\ref{th:conf}.

\begin{definition}[Right (left) principal coordinator]
    \label{def:coord}
    Let $B$ be confor-mable to component $A$, and
    let $\chi$ be a choice function that selects the best component out of a set of components.
    The right (respectively, left) principal coordinator that makes $B$ behave in conformance with $A$, denoted as $A \downharpoonright^\chi_\sigma B$ (respectively, $A \downharpoonleft^\chi_\Sigma B$), is the component $\chi(A \downharpoonright^*_\Sigma B)$ (respectively, $\chi(A \downharpoonleft^*_\Sigma B)$).
\end{definition}

\begin{example}[Greatest element]
    \label{ex:greatest}
    One measure 
    that can impose an order on a set of coordinators uses containment.
    The refinement relation used to define conformance also accepts coordinators that have no behavior at all, and trivially satisfies the behavior inclusion relation.
    Maximizing the observables of the resulting composite behavior set, corresponds to finding the greatest coordinator under a containment relation. 

    More generally, if $\mathcal{C}$, the set of right (left) coordinators that make $B$ conformant to $A$ under $\times$, is equipped with an ordering such that the greatest element is an element of $\mathcal{C}$, then the function that picks the greatest element can act as a choice function to select the best conformance coordinator of $B$ to behave as $A$ under $\times$.
    \hfill$\blacksquare$
\end{example}

Following the result of Theorem~\ref{th:div}, if the interface of the quotient is fixed, then the subset of quotients that have the same interface has a least element with the containment relation introduced in Definition~\ref{def:containment}. 
We show in Theorem~\ref{th:conf} that a similar result holds for the set of coordinators.
\begin{theorem}
    \label{th:conf}
    Let $\leq$ be the containment relation introduced in Definition~\ref{def:containment}.
    Let $\times_{(R,\oplus)}$ be a commutative, associative, idempotent, and monotonic (as in Definition~\ref{def:monotonicity}) product on components.
    Given $B$ conformant with $A$ under $\times_{(R,\oplus)}$, any finite subset of coordinators sharing the same interface $E$ has an upper bound that is itself a coordinator in $A\downarrow^*_\Sigma B$.
\end{theorem}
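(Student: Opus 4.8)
The plan is to reuse the template of Theorem~\ref{th:div}, while accounting for a crucial asymmetry between the two statements. For quotients we sought the lowest element and the product served, since under the extra hypothesis $C \times_\Sigma D \le C$ the product sits $\le$-below each factor. For coordinators, Example~\ref{ex:greatest} indicates that we instead want the greatest, behavior-maximizing, element; the product, which shrinks behavior (e.g.\ for the synchronous product $\bowtie$ one has $C \bowtie D \le C$ when the operands share an interface), is therefore the wrong witness. My first move is thus to replace the product by a behavior-enlarging operation, and the obvious candidate is the behavioral union of the family taken over their common interface. Concretely, for a finite family $\{C_1,\dots,C_n\} \subseteq A\downarrow^*_\Sigma B$, each with interface $E$, I would set $U = (E,\ \bigcup_{k=1}^n L_{C_k})$ and prove that (i) $U$ is a $\le$-upper bound of the family and (ii) $U$ is itself a coordinator.

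Step (i) is immediate from Definition~\ref{def:containment}: each $C_k$ and $U$ share the interface $E$, and every $\sigma \in L_{C_k} \subseteq L_U$ is contained in itself, so $C_k \le U$ for every $k$. Step (ii) is the substantive part. I would first show that $\times_\Sigma B$ distributes over the behavioral union on the left operand. Since the composability relation $R(E,E_B)$ and the composition function $\oplus$ depend only on the interfaces $E$ and $E_B$, which are fixed across the whole family, Definition~\ref{def:prod} yields
\[ U \times_\Sigma B = \Bigl(E \cup E_B,\ \bigcup_{k=1}^{n} L_{C_k \times_\Sigma B}\Bigr), \]
i.e.\ the behavioral union $\bigcup_{k} (C_k \times_\Sigma B)$ over the common interface $E \cup E_B$.

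With distributivity in hand the rest is bookkeeping. Each $C_k$ being a coordinator gives $C_k \times_\Sigma B \sqsubseteq A$, that is $L_{C_k \times_\Sigma B} \subseteq L_A$ and $E \cup E_B \subseteq E_A$; taking the union preserves both inclusions, so $U \times_\Sigma B \sqsubseteq A$. Non-emptiness is inherited from the coordinators: since each $C_k \ne \mathbf{0}$ we have $L_U \ne \emptyset$, hence $U \ne \mathbf{0}$. Together these place $U \in A\downarrow^*_\Sigma B$, and by (i) $U$ is the required upper bound.

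The step I expect to be the main obstacle is the distributivity identity, as it is the only place where the structure of the product is used and where one must argue carefully that fixing the interface $E$ across the family lets the union commute with $\times_\Sigma B$; the argument rests entirely on $R$ and $\oplus$ being determined by interfaces alone. I would also flag a subtlety worth recording: unlike Theorem~\ref{th:div}, this union-based witness needs neither associativity, commutativity, idempotency, nor monotonicity. Those hypotheses are exactly what one would invoke to show that the \emph{product} $C_1 \times_\Sigma \cdots \times_\Sigma C_n$ is again a coordinator (duplicating $B$ via idempotency, regrouping by associativity and commutativity, and collapsing with monotonicity and $A \times_\Sigma A = A$); but that product is a lower bound, not the upper bound the statement demands, so I would deliberately avoid it here.
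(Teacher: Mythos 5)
Your proof is correct and is essentially the paper's own argument: the paper likewise takes the behavioral union $\bigcup_k C_k$ over the shared interface $E$ as the witness, proves the same distributivity identity $(B \times_\Sigma C_1) \cup (B \times_\Sigma C_2) = B \times_\Sigma (C_1 \cup C_2)$ by appealing to the fact that $R$ and $\oplus$ depend only on the (fixed) interfaces, and then concludes that the union is an upper bound and itself a coordinator. Your closing observation is also borne out by the paper: although its proof nominally invokes associativity, commutativity, and idempotency, the displayed derivation uses only distributivity and the compatibility of $\sqsubseteq$ with union, exactly as yours does (and you are in fact slightly more careful, since you address non-emptiness of the witness, which the paper leaves implicit).
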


Finding a conformance coordinator that makes $B$ behave in conformance with $A$ is looser than finding a quotient for $A$ divisible by $B$: any quotient of $A$ by $B$ under a product $\times_\Sigma$ is therefore a coordinator that makes $B$ conformant with $A$.
Such quotient-coordinator has the property that it ``coordinates'' B such that the resulting behavior covers the entire behavior of $A$.

For some suitable products, Theorem~\ref{th:div} and Theorem~\ref{th:conf} state the existence of, respectively, a lowest element in the subsets of quotients and a largest element in the set of coordinators that share the same interface. 
The synchronous product introduced in Example~\ref{ex:sync} is one product that satisfies the requirements of both theorems.
We show in Section~\ref{sec:sync} how division under synchronous product can characterize valid \emph{updates} of a system, and how conformance under synchronous product can characterize valid \emph{protocols}.

\newpage
\section{Applications of Division}
\label{sec:sync}

In this section, we consider the robot, field, and protocol components introduced in Examples~\ref{ex:robots},~\ref{ex:grid}, and~\ref{ex:protocol}, together with the synchronous product $\bowtie$ of Example~\ref{ex:sync} and the product $\times_{\Sigma_{RF}}$ of Example~\ref{ex:fr-sign}.
Both products are commutative (Lemma $1$ in~\cite{DBLP:journals/corr/abs-2110-02214}), and we therefore omit the right and left qualifiers for division and conformance.

\paragraph{Initial conditions}
For each robot, we fix the power requirement of a move and the time period $T$ between two observations to be such that a move of a robot during a period $T$ corresponds to a one unit displacement on the field. Then, each move action of a robot changes the location of the robot by a fixed number of units or none if there is an obstacle. We write $R(i)$ for robot $R(i,P,T)$ with such fixed $P$ and $T$.
As an example, the observation $(\{d(i),\mathit{read}(i,(x;y))\}, t)$ followed by the observation $(\{\it{read}(i,(x';y'))\},t+T)$ gives only few possibilities for $(x';y')$: either $(x;y) = (x';y')$, in which case the robot got blocked in the middle of its move, or $(x';y')$ increases (or decreases) by one unit the $x$ or $y$ coordinates, according to the direction $d$.

Let the initial state $\mu$ of the field be such that $\mu(1) = (3;0)$, $\mu(2)= (2;0)$, and $\mu(3)= (1;0)$, which defines the initial positions of $R(1)$, $R(2)$ and $R(3)$ respectively, and let there be obstacles throughout the field on the $3\times 2$ rectangle from $(0;-1)$ to $(4;2)$, i.e., for all $(x,y) \in (\llbracket 0;4\rrbracket \times \llbracket -1;2\rrbracket) \setminus (\llbracket 1;3\rrbracket \times \llbracket 0,1\rrbracket)$, there exists $i \in I$ such that $\mu(i) = (x,y)_i$. As a result, the moves of each robot are restricted to the inside of the $3\times 2$ rectangle as displayed in Table~\ref{table:strategy}.

\subsection{Approximation of the Field as a Grid}
\paragraph{Problem}
A field component captures in its behavior the continuous responses of a physical field interacting with robots roaming on its surface. The interface of the field contains therefore an event, per object, for each possible position and each possible move. 
In some cases, however, only a subset of those events are of interest. For instance, we may want to consider only integer position of objects on the grid, and discard intermediate observables. As a result, such component would describe a discrete grid instead of a continuous field, while preserving the internal physics: no two objects are located on the same position.
We show how to define the grid as a subcomponent of the field, using the division operator.

\paragraph{Definition of the grid}
We use division to capture a discrete grid component $G_\mu(I)\leq F_\mu(I)$ contained in the field component $F_\mu(I)$. A grid component has the interface $E_G(I)$, where $E_G(I) \subseteq E_F(I)$ with $(x,y)_i \in E_G(I)$ implies $x,y \in \N$.

We use the component $C = (E_G(I), \TES{E_G(I)})$ to denote the free component whose behavior contains all TESs ranging over the interface $E_G(I)$.
Then, by application of Theorem~\ref{th:div}, we use the least element with respect to $\leq$ of the set of quotients of $C \times_{\Sigma_{\ssync}} F_\mu(I)$ divided by $F_\mu(I)$ under $\Sigma_\ssync$ to define the grid. Thus,
\begin{equation}
    \label{eq:grid}
    G_\mu(I) = _{\Sigma_\ssync}\cfrac{C \times_{\Sigma_{\ssync}} F_\mu(I)}
    {F_\mu(I)}(\leq, E_G(I))
\end{equation}
which naturally emerges as a subcomponent of the field component $F_\mu(I)$.

\paragraph{Consequences}
The grid component inherits some physical constraints from the field $F_\mu(I)$, but is strictly contained in the field component.
There is a fundamental difference between an approximation of the position as a robot sensor detects, and a restriction of the field to integer positions as in the grid component. In the former, the component reads a value that does not corresponds precisely to its current position, while in the latter, the position read is exact but observable only for integer values.

As a result, the two component expressions $(R(1)\bowtie R(2) \bowtie R(3)) \times_{\Sigma_{RF}} G_\mu(I)$ and $(R(1)\bowtie R(2) \bowtie R(3)) \times_{\Sigma_{RF}} F_\mu(I)$ restrict each robot behavior in different ways: the grid component allows discrete moves only and the position that a robot reads is the same position as that of an object on the grid, while the field component allows continuous moves but the position that a robot reads is an approximation of the position of the robot on the field.
In the sequel, we use the grid component $G_\mu(I)$ instead of the field component.

\subsection{Updates of components}
\label{sec:updates}
\paragraph{Problem} 
The interaction signature of a product operator on components restricts which pairs of behaviors are composable.
As a consequence, some components may have \emph{more behavior} than necessary, namely the elements that do not occur in any composable pair. An update is an operation that preserves the global behavior of a composite system while changing an operand of a product in the algebraic expression that models the composed system. The goal of such update, for instance, is to remove some behaviors that are not composable or prevent some possible runtime errors. 
We give an example of such update that replaces a robot component by a new version that removes some of its possibly blocking moves.

\paragraph{Scenario}
For each robot, we fix its behavior to consist of TESs that alternate between move and reading observations. Moreover, for a robot's period $T$, and arbitrary $n_i \in \N$, we let $T \times n_i, i\in\N$, represent the timestamp of the $i^{th}$ observation of a TES in its behavior, so long as $n_i<n_{i+1}$.
Table~\ref{table:counter-ex} displays elements of the behavior for each robot. For instance, the TES $\eta:R(1)$ captures the observations resulting from $R(1)$ moving west twice. Note that, in composition with the grid component, the readings may conflict with the actual position of the robot, as some moves may not be allowed due to obstacles on the path.

For instance, given the expression $(R(1)\bowtie R(2)\bowtie R(3))\times_{\Sigma_{RF}} G_\mu(I)$, the TES $\eta$ is not observable as it is not composable with any of the TESs $\tau$ or $\delta$ from $R(2)$ and $R(3)$ respectively. We show how to use division to remove all of such behaviors.

\begin{table}
\centering
\caption{Prefixes of four TESs for $R(1)$, $R(2)$, and $R(3)$. For direction $d$ and robot $i$, we write $d(i)$ instead of $d(i,p)$ since the power $p$ is initially fixed. We omit the set notation as observations are all singletons. We consider $(n_i)_{i\in\N}$ as an increase sequence of natural numbers.} 
\begin{tabular}{ccccc}
    t/T &\ $\sigma:R(1)$\ &\ $\eta:R(1)$\ &\ $\tau:R(2)$\ &\ $\delta:R(3)$ \\
    \hline
    $n_0$ & $\rread(1,(3;0))$  & $\rread(1,(3;0))$  & $\rread(2,(2;0))$ & $\rread(3,(1;0))$  \\
    $n_1$ & $N(1)$  & $W(1)$  & $N(2)$ & $E(3)$  \\
    $n_2$ & $\rread(1,(3;1))$  & $\rread(1,(2;0))$  & $\rread(2,(2;1))$ & $\rread(3,(2;0))$  \\
    $n_3$ & $W(1)$  & $W(1)$  & $W(2)$ & $E(3)$ \\
    $n_4$ & $\rread(1,(2;1))$  & $\rread(1,(1;0))$  & $\rread(2,(1;1))$ & $\rread(3,(3;0))$  \\
    $n_5$ & $W(1)$  & $\emptyset$  & $S(2)$ & $\emptyset$ \\
    $n_6$ & $\rread(1,(1;1))$  & $\emptyset$  & $\rread(2,(1;0))$ & $\emptyset$  \\
    $n_7$ & $S(1)$  & $\emptyset$  & $E(2)$ & $\emptyset$ \\
    $n_8$ & $\rread(1,(1;0))$  & $\emptyset$  & $\rread(2,(2;0))$ & $\emptyset$  \\
    $n_9$ & $\emptyset$ & $\emptyset$ & $\emptyset$ & $\emptyset$ \\
    $...$ & $...$       & $...$       & $...$          & $...$
\end{tabular}
\label{table:counter-ex}
\end{table}

\paragraph{Update} 
The replacement for $R(1)$ should preserve the global behavior.
We use division to define an update $R'(1)$ of $R(1)$ that removes all elements from its behavior that are not composable with any element from the behavior of $R(2)$ and $R(3)$ under the constraints imposed by the grid.

As a result, the component
\begin{equation}
    \label{eq:star-div}
    R'(1) = \Sigma_{\ssync} \cfrac{(R(1)\bowtie R(2)\bowtie R(3)) \times_{\Sigma_{RF}} G_\mu(I)}{(R(1)\bowtie R(2)\bowtie R(3))\times_{\Sigma_{RF}} G_\mu(I)}(\leq, E_R(1))
\end{equation}
contains in its behavior all elements ranging over the interface $E_R(1)$ that are composable with elements in the behavior of the dividend component.
Note that the set of quotients is filtered on the interface $E_R(1)$, and $R(1)$ trivially qualifies as a quotient. However, $R(1)$ is not minimal as $\eta$ can be removed from its behavior.

\paragraph{Consequence} As a consequence, we defined, using division, an update for component $R(1)$ that removes some elements of its behavior while preserving the global behavior of the composite expression.

Note that the fact that in our example each robot alternates between a move and a read is crucial to remove, by composition, undesired behavior. Indeed, the readings of each robot must synchronize with the location displayed on the grid, and therefore implies that the robot successfully moved. The constraints imposed by the grid coordinate the robot by preventing two robots to share the same location.

\newcommand{\sorted}{\mathit{sorted}}
\subsection{Coordination and distribution}
\label{sec:protocol}
\paragraph{Problem}
Consider the scenario previously described with an additional modification: a robot no longer observes its location after every move, but only at the end of the sequence of moves. The TESs of each robot's behavior are described in Table~\ref{table:strategy}, where $(x_i,y_i)$ ranges over possible position readings for robot $i$. As a result, conflicts between robots may no longer be observable, and the timing of observations may render some incidents of robots blocking each other unobservable. 
We define a coordinator that makes the system conformant to a global property. As opposed to the division operation, a conformance coordinator may restrict the system behavior to a subset that conforms to a specified property.
We consider the following property $P_\sorted(I)$: ``eventually, all the robots get sorted, i.e., every robot $R(i)$ eventually ends on the grid location $(i;0)$.''.

\begin{table}
\centering
\caption{Prefixes of three TESs for $R(1)$, $R(2)$, and $R(3)$, graphically represented by some trajectories on a grid.}
\begin{tabular}{cccc}
    t/T &\ $\sigma:R(1)$\ &\ $\tau:R(2)$\ &\ $\delta:R(3)$ \\
    \hline
    $n_1$ & $N(1)$  & $N(2)$ & $E(3)$  \\
    $n_2$ & $W(1)$  & $W(2)$ & $E(3)$ \\
    $n_3$ & $W(1)$  & $S(2)$ & $\rread(3,(x_3;y_3))$ \\
    $n_4$ & $S(1)$  & $E(2)$ & $\emptyset$ \\
    $n_4$ & $\rread(1,(x_1;y_1))$  & $\rread(2,(x_2;y_2))$ & $\emptyset$ \\
    $n_5$ & $\emptyset$ & $\emptyset$ & $\emptyset$ \\
    $...$ & $...$       & $...$          & $...$
\end{tabular}
\qquad\qquad
    \begin{tikzpicture}
        [baseline=(current bounding box.center)]
        \node (a) at (0,0) {};
        \node (b) at (3,2) {};
        \node[shift={(-0.5,0.5)}] (R1) at (3,0) {$R_1$};
        \node[shift={(-0.5,0.5)}] (R2) at (2,0) {$R_2$};
        \node[shift={(-0.5,0.5)}] (R3) at (1,0) {$R_3$};
        \draw[step=1.0,black,thin,xshift=1cm,yshift=1cm] (a) grid (b);
        \draw[->,draw=blue] (1.5,0.8) -- ++(0,0.5) -- ++(-0.9,0) -- ++(0,-0.5) -- ++(0.6,0) ;
        \draw[->,draw=red]  (0.4,0.3) -- ++(0,-0.2) -- ++(2.1,0) -- ++(0,0.2);
        \draw[->,draw=green] (2.5,0.8) -- ++(0,0.8) -- ++(-2.2,0) -- ++(0,-0.8);
    \end{tikzpicture}
\label{table:strategy}
\end{table}

\paragraph{Global coordinator}
We can define, from the sort property, a component as $C_\sorted(I) = (E_\sorted(I), L_\sorted(I))$ whose interface is the union of the interfaces of all robots and the grid, i.e., $E_\sorted(I) = E_G(I) \cup \bigcup_{i\in I} E_R(i)$, and whose behavior $L_\sorted(I) \subseteq \TES{E_\sorted(I)}$ contains all sequences of moves that make the robots eventually end in their respective sorted grid positions, i.e., $\sigma \in L_\sorted(I)$ if and only if there exists $t \in \Rp$ such that $(O,t) \in \sigma$ with $(i;0)_i \in O$ for all $i \in I$.
Note that, by construction, the behavior of component $C_\sorted(I)$ may contain some TESs from the behavior of component $(R(1)\bowtie R(2)\bowtie R(3)) \times_{\Sigma_{RF}} G_\mu(I)$, namely ever TESs that satisfies the property. 

Consequently, the product of component $C_\sorted(I)$ with $(R(1)\bowtie R(2)\bowtie R(3))\times_{\Sigma_{RF}} G_\mu(I)$, under the signature $\Sigma_\ssync$, defines a component whose behavior contains all elements in the behavior of $(R(1)\bowtie R(2)\bowtie R(3))\times_{\Sigma_{RF}} G_\mu(I)$ that are also in the behavior of $C_\sorted(I)$.
Therefore, if the behavior of component $((R(1)\bowtie R(2)\bowtie R(3))\times_{\Sigma_{RF}} G_\mu(I)) \bowtie C_\sorted(I)$ is not empty, $(R(1)\bowtie R(2)\bowtie R(3))\times_{\Sigma_{RF}} G_\mu(I)$ is conformant to $C_\sorted(I)$ and $C_\sorted(I)$ is a principal coordinator.
However, using $C_\sorted(I)$ as a coordinator requires each component to synchronize, at each step, with every other component. 
We show how to define a different choice function on the set of coordinators, in order to identify a minimalist form of coordination.

\paragraph{Minimalist coordinator} 
We define a coordinator whose interface is strictly contained in the interface of the global $C_\sorted(I)$ coordinator. More precisely, we search for a coordinator over the interface of robot $R(1)$ that makes the system $(R(1)\bowtie R(2) \bowtie R(3)) \times_{\Sigma_{RF}} G_\mu(I)$ conformant to the property component $C_\sorted(I)$. 
First, observe that the set of coordinators 
\[
    (R(1)\bowtie R(2)\bowtie R(3)) \times_{\Sigma_{RF}} G_\mu(I) \downarrow^*_{\Sigma_{\ssync}} C_\sorted(I)
\]
filtered on the interface $E_R(1)$ is empty. Indeed, for any set of timestamp factors $n_1$, $n_2$, $n_3$, and $n_4$ for the observables of $R(1)$ in Table~\ref{table:strategy}, there exists an element from the behavior of $R(2)$ that delays its first action until after $n_3$, and eventually ends up in a blocking position. 
As a consequence, there is no coordinator restricted to the events of $R(1)$ that makes $(R(1)\bowtie R(2)\bowtie R(3)) \times_{\Sigma_{RF}} G_\mu(I)$ conformant to $C_\sorted(I)$.

Instead, we consider filtering the set of coordinators with the interface $E_R(1)\cup \{N(2)\}$.
In this case, one can find a simple coordinator that makes the set of robots to conform with the sort property. Indeed, every observation $N(1)$ of robot $R(1)$ must occur after an observable $N(2)$ of robot $R(2)$. As a result, such coordinator $C_{12}$ restricts $R(1)$ to move only after $R(2)$ moves, which results in a composite system conformant to $C_\sorted(I)$. Thus, the new coordinator $C_{12}$, in product with $R(1)$, $R(2)$, $R(3)$, and the grid $G_\mu(I)$ satisfies the sort property, i.e., $((R(1)\bowtie R(2) \bowtie R(3))\times_{\Sigma_{RF}} G_\mu(I)) \bowtie C_{12} \sqsubseteq C_\sorted(I)$.

\paragraph{Consequence}
We showed one global coordinator for the set of robots to satisfy the sort property, and one minimalist coordinator over the interfaces of $E_R(1)$ with an event from $E_R(2)$. The minimalist coordinator has an interface strictly included in the global coordinator, which therefore minimizes the amount of interaction among components.
In the next section, we discuss the cost of coordination as a possible measure to order a set of conformance coordinators.

\section{Discussion}
\label{sec:discussion}
The operations of division and conformance defined earlier characterize all possible updates and coordinators for a composite system. In general, the set of quotients or coordinators is not a singleton, which then necessitates a choice function to \emph{pick} a component that best suits the needs. 
We saw in Theorem~\ref{th:div} how such a choice function can be defined using an ordering on components, and choosing the least among such components.
Intuitively, such a choice function prefers a quotient with the least number of observations. However, Theorem~\ref{th:div} assumes a fixed interface, and does not discuss how to rank components according to their interfaces. We discuss alternative rankings, below, in the case of the synchronous product $\bowtie$ of Example~\ref{ex:sync}.

\subsection{Cost of coordination}
Let $A = (E_A, L_A)$ and $B = (E_B, L_B)$ be two components such that the set of quotients of $A$ divisible by $B$ under $\bowtie$, the synchronous product, is non empty.
Consider, as well, a component $C = (E_C, L_C)$ in the same set of quotients.
We discuss alternative scenarios based on the interface of component $C$.

Consider the case where $E_C \cap E_B \not = \emptyset$. Then, events in the intersection $E_C \cap E_B$ are events for which $C$ and $B$ must perform a simultaneous observation, i.e., with equal time stamps. The size of the intersection $E_C \cap E_B$ can therefore characterize how much coordination must take place between two implementations of components $B$ and $C$ to successfully achieve those synchronous observations. Alternatively, if a component $D = (E_D, L_D)$ is in the set of quotients such that $|E_D \cap E_B| < |E_C \cap E_B|$, then the smaller number of shared events of $D$ with $B$ hints at a potentially smaller amount of coordination between components $D$ and $B$. 
\footnote{
     Of course, a smaller or equal number of shared interface events does not necessarily mean
smaller or equally intense coordination: it is the frequency of the occurrence of shared events
that determines the overhead of required coordination. More generally, it is the frequency of
occurrences of events (shared or otherwise) related by composability relations that impact
the coordination overhead.  The number of shared events may still
be used as a crude approximation, or some more sophisticated, e.g., stochastic, model may be
used instead, as a measure of the overhead of coordination.
}

In the case that $E_C \cap E_B = \emptyset$ and $E_C \not = \emptyset$, the family of quotients with interface $E_C$ are particularly useful. The fact that the two interfaces are disjoint means that $A$ can be decomposed into two components that do not need any coordination. Indeed, as $\bowtie$ constrains only occurrences of shared events, if $B$ and $C$ share no events then they can be run completely independently of one another. 

The two cases highlighted above give us some insight into how the intensity of coordination can be used as a measure to rank components. Note that such ranking is contextual to the dividend and the divisor. Although $C$ may require more coordination than $D$ to synchronize with $B$ in order to form component $A$, in the context of $A$ divisible by another component $F$, component $C$ may become preferable to $D$.

The cost of coordination discussed here is orthogonal to efficiency measures discussed in Theorem~\ref{th:div}. Thus, the two measures can be combined to first rank components in terms of their interfaces to minimize the amount of coordination required, and then rank components sharing the same interface in terms of the size of their behavior.

\subsection{Series of division}
We defined a division operator for components. We saw that, under some criteria, division may return a `better' description of a composite system, i.e., for $A = B \times C$, the division of $A$ by $B$ may return a $D$ better than $C$ while preserving the behavior of $A$.

The question of convergence, then, naturally follows.
Consider the expression $A = B \times C$, and the division of $A$ by $B$ returning a component $C' \not = C$.
Symmetrically, the division of $A$ by the new component $C'$ may return a component $B' \not = B$.
Repeating the same process, dividing $A$ by $B'$ and so on, produces a sequence of components $C^{(n)}$ and $B^{(n)}$ in their respective $n^{th}$ division.
It is interesting and practically useful to investigate if the sequences $C^{(n)}$ and $B^{(n)}$ eventually converge to a fixed pair of components, and if so, under what conditions.

\section{Related work}

\paragraph{Component-based design}
In~\cite{DBLP:conf/rtas/HenzingerM06} 
the authors present a calculus of time sensitive components that implement a set of sequences of time sensitive tasks. The model focuses more particularly on the timing profile of a task, and captures, in a component interface, the arrival time and latency for each sequence. Our work can benefit from~\cite{DBLP:conf/rtas/HenzingerM06} by using their formalism to specify and implement components. Alternatively, our family of interaction products on components may give new tools for compositional specification of real-time scheduling. 

In~\cite{DBLP:conf/amast/BidoitH08}, 
the authors present an axiomatic specification of components as pairs of an interface signature and a set of sentences capturing some pre and post conditions of stateful components.
The semantics of such an interface is the set of models that satisfy all axioms.
The authors present an implementation of such interfaces as component bodies, and also provide a composition operation that composes required and provided interfaces of two components. 
The difference with our work is mainly on the nature of the algebra: we model \emph{interaction} algebraically, while the authors model the interface, specification, and the body of a component algebraically. Both works can benefit from each other, as our work can be extended using an algebraic specification of components; and our interaction product can extend the algebra of components in~\cite{DBLP:conf/amast/BidoitH08}.

\paragraph{(De)composition} In~\cite{DBLP:conf/esop/ChenCJK12}, the authors present a declarative and an operational theory of components, for which they define a refinement relation and monotonicity results for some composition operators. Our work is related as it aims for similar results, but for the case of Cyber-Physical systems. Thus, instead of having input and output actions, our components have timed observations, and composability relations.
We present, as well, quotient operation on components, and show how it can be used to synthesize coordinating CPSs. 

In~\cite{DBLP:conf/facs2/PourvatanSAB10}, the authors consider the problem of decomposition of constraint automata. 
This work provides a semantic foundation to prove that the construction in~\cite{DBLP:conf/facs2/PourvatanSAB10} is a valid division.

\paragraph{Algebra, co-algebra}
The algebra of components described in this paper is an extension of~\cite{DBLP:journals/corr/abs-2110-02214}.
Algebra of communicating processes~\cite{DBLP:books/daglib/0000497} (ACP) achieves similar objectives as decoupling processes from their interaction. For instance, the encapsulation operator in process algebra is a unary operator that restricts which action occurs, i.e., $\delta_H(t \parallel s)$ prevents $t$ and $s$ to perform actions in $H$. Moreover, composition of actions is expressed using communication functions, i.e., $\gamma(a,b)=c$ means that actions $a$ and $b$, if performed together, form the new action $c$.
Different types of coordination over communicating processes are studied in~\cite{BERGSTRA1984109}.
In~\cite{DBLP:journals/jlp/BaetenM01}, the authors present an extension of ACP to include time sensitive processes. 

The modeling of a component's interaction using co-algebraic primitives is at the foundation of the Reo language~\cite{DBLP:conf/wadt/ArbabR02}. 
In~\cite{barbosa2001}, the question of separation of components into two sub-components is addressed from a co-algebraic perspective.

\paragraph{Discrete Event Systems}
Our work represents both cyber and physical aspects of systems with a unified model of discrete event systems.
In~\cite{doi:10.1146/annurev-control-053018-023659}, the author lists the current challenges in modelling cyber-physical systems in this way. 
The author points to the problem of modular control, where even though two modules run without problems in isolation, the same two modules may block when they are used in conjunction.
In~\cite{DBLP:journals/tac/SampathLT98}, the authors present procedures to synthesize supervisors that control a set of interacting processes and, in the case of failure, report a diagnosis. 
An application for large scale controller synthesis is given in~\cite{MOORMANN2021104902}.

\paragraph{Coordination}
In~\cite{Nivat1982}, the author describes infinite behaviors of process and their synchronization. Notably, the problem of non-blockingness is considered: if two processes eventually interact on some actions, how to make sure that they will not block each other.

\section{Conclusion}
We approach the challenge of designing cyber-physical systems using algebraic methods.
Components denote sequences of observations over time, and operations on components capture the interaction that arises from the behaviors of those components.
We extend a family of algebraic products with a corresponding family of division operators. 

We show how division can serve as a decomposition operator.
Intuitively, a component is divisible by another component if there exists a third component that, in composition with the latter gives the former. Division also opens some new reasoning possibilities as to update a system by replacing a component with a `better' component while preserving the overall behavior.
We apply our framework to reason about updates in a cyber-physical system consisting of robots moving on a shared field.

\paragraph{Acknowledgement}
Talcott was partially supported by the U. S. Office of Naval Research under award numbers N00014-15-1-2202 and N00014-20-1-2644, and NRL grant N0017317-1-G002.
Arbab was partially supported by the U. S. Office of Naval Research under award number N00014-20-1-2644.

\bibliographystyle{plain}
\bibliography{references}

\appendix
\section{Proofs}
\label{appendix:proof}

\begin{proof}[lemma~\ref{lemma:po-ref}]
Follows from reflexivity, antisymmetry, and transitivity of set inclusion. 
\qed
\end{proof}
\begin{proof}[lemma~\ref{lemma:po-cont}]
    Let $A = (E_A,L_A)$, $B = (E_B, L_B)$, and $C=(E_C,L_C)$ be three components.
    We show that $\leq$ is reflexive, transitive, and antisymmetric for any set $\mathcal{C}$ that satisfies the above condition:
    \begin{enumerate}
        \item reflexivity: $A \leq A$ holds.
        \item  transitivity. Let $A \leq B$ and $B \leq C$.
            Then, for all $\sigma:A$, there exists $\tau:B$ such that $\sigma\leq \tau$, and for all $\tau:B$, there exists $\delta:C$ such that $\tau \leq \delta$. Then, we conclude that for all $\sigma:A$, there exists $\delta:C$ such that $\sigma\leq \delta$ and $A \leq C$.
        \item antisymmetric. 
            We suppose that $A$ and $B$ are elements of the set $\mathcal{C}$.
            If $A \leq B$ and $B \leq A$, then for all $\sigma:B$, there exists $\tau:A$ such that $\sigma \leq \tau$. As well, for any $\tau:A$, there exists $\sigma:B$ such that $\tau \leq \sigma$.
            Thus, for any $\sigma:B$, there exists $\tau:A$ and $\delta:B$ with $\sigma \leq \tau \leq \delta$.
            Given the assumption of $A$ and $B$, we can conclude that $\sigma = \tau = \delta$.
            Similarly, we show that $L_A \subseteq L_B$, and that $A = B$.
    \end{enumerate}
    \qed
\end{proof}

\begin{proof}[Lemma~\ref{lemma:comp}]
Let $A$, $B$, and $C$ be three components, such that $B \sqsubseteq A$.
Then, the interface of $B \bowtie C$ is $E_B \cup E_C$, which is included in $E_A \cup E_C$ the interface of $A \bowtie C$.

For any TES $\sigma:B \bowtie C$, there exist two TESs $\beta:B$ and $\delta:C$ such that $(\beta,\delta)$ are synchronous, and $\sigma = \beta[\cup]\delta$. Since for any $\beta:B$ we also have $\beta:A$, then $\sigma$ is also an element of the behavior of $A\bowtie C$, and $B\bowtie C \sqsubseteq A \bowtie C$.
\qed
\end{proof}

\begin{proof}[Theorem~\ref{th:conf}]
Let $\mathcal{C}(E)$ be a finite subset of the set $\{ C \mid C$ \textit{ has interface  $E$ and } $C \in A\downarrow^*_{(R,\oplus)} B \}$.
    We define the union of two components $A = (E_A, L_A)$ and $B=(E_B, L_B)$, as the component $A \cup B = (E_A \cup E_B, L_A \cup L_B)$.
    The union of all components in $\mathcal{C}(E)$ is the component $\bigcup \mathcal{C}(E) = (E, \bigcup_{C \in \mathcal{C}(E)} L_C)$ where $L_C$ is the behavior of component $C$.
    Moreover, we have that, for any component $A, B, C$, with $B$ and $C$ sharing the same interface $E$, $(A \times_{(R,\oplus)} B) \cup (A \times_{(R,\oplus)} C) = A \times_{(R,\oplus)} (B \cup C)$. Indeed let $L$ be the behavior of $(A \times_{(R,\oplus)} B) \cup (A \times_{(R,\oplus)} C)$ and $S$ be the behavior of $A \times_{(R,\oplus)} (B \cup C)$:
    \begin{align*}
        L &= \{ \sigma \oplus \tau \mid \sigma \in L_A, \tau \in L_B, (\sigma, \tau) \in R(E_A, E) \} \cup \\
          &\quad \ \{ \sigma \oplus \tau \mid \sigma \in L_A, \tau \in L_C, (\sigma, \tau) \in R(E_A, E) \}\\
          &= \{ \sigma \oplus \tau \mid \sigma \in L_A, \tau \in L_B \cup L_C, (\sigma, \tau) \in R(E_A, E) \}
          = S
    \end{align*}

    We show that $\bigcup\mathcal{C}(E)$ is an upper bound for the set of coordinators $\mathcal{C}(E)$.
    For any $C \in \mathcal{C}(E)$, we have
    \[ C \sqsubseteq \bigcup \mathcal{C}(E)\]
    which implies that $C \leq \bigcup \mathcal{C}(E)$ and makes $\bigcup \mathcal{C}(E)$ an upper bound for $\mathcal{C}(E)$.

    Given associativity, commutativity, and idempotency of $\times_{(R,\oplus)}$, for any $C_1, C_2 \in \mathcal{C}(E)$:
    \begin{align*}
         B \times_{(R,\oplus)} C_1 & \sqsubseteq A \\
         B \times_{(R,\oplus)} C_2 & \sqsubseteq A \\
        (B \times_{(R,\oplus)} C_1) \cup (B \times_{(R,\oplus)} C_2) & \sqsubseteq A \\
        B \times_{(R,\oplus)} (C_1 \cup  C_2) & \sqsubseteq A 
    \end{align*}
    which, applied over the set $\mathcal{C}(E)$, gives $B \times_{(R,\oplus)} (\bigcup \mathcal{C}(E)) \sqsubseteq A$.
    Thus, $\bigcup \mathcal{C}(E) \in \mathcal{C}(E)$. 
    \qed
\end{proof}

\end{document}